\documentclass[a4paper,oneside,12pt]{article}

\usepackage{amsmath,amsthm,amsfonts,amscd,amssymb}
\usepackage{geometry}
\usepackage[english]{babel}
\usepackage[utf8]{inputenc}
\usepackage[T1]{fontenc}
\usepackage{graphicx}
\usepackage{float}
\usepackage{pstricks}
\usepackage{bbm}
\usepackage{mathtools}
\usepackage{nicefrac}
\usepackage{calrsfs}
\usepackage{framed}
\usepackage{hyperref}
\usepackage{enumitem}
\mathtoolsset{showonlyrefs}

\newcommand{\ep}{\varepsilon}
\newcommand{\eps}{\varepsilon}
\newcommand{\bbP}{{\mathbb P}}

\newcommand{\calD}{{\mathcal D}}

\newcommand{\calL}{{\mathcal L}}
\newcommand{\bbZ}{{\mathbb Z}}

\newtheorem{theorem}{Theorem}[section]

\newtheorem{corollary}[theorem]{Corollary}
\newtheorem{lemma}[theorem]{Lemma}

\newtheorem{definition}[theorem]{Definition}

\newcommand{\im}[1]{}

\title{Conformal invariance of the planar four-state Potts model and its random-cluster representation}
\date{6 October 2026}
\author{Emile Averous\footnote{Universit\'e de Gen\`eve, \url{emile.averous@unige.ch}}, Hugo Duminil-Copin\footnote{Universit\'e de Gen\`eve, \url{hugo.duminil@unige.ch}, Institut des Hautes \'Etudes Scientifiques,
\url{duminil@ihes.fr}}, Tiancheng He\footnote{Universit\'e de Gen\`eve, \url{tiancheng.he@unige.ch}}, \\ Piet Lammers\footnote{CNRS, Sorbonne Universit\'e, LPSM,
\url{piet.lammers@cnrs.fr}} and Ioan Manolescu\footnote{Universit\'e de Fribourg,
\url{ioan.manolescu@unifr.ch}}}

\begin{document}
\maketitle

\begin{abstract}
For the critical random-cluster model on $\mathbb Z^2$ with cluster-weight $q=4$, we prove the convergence of loops to CLE(4)
in the scaling limit. Conformal invariance of the four-state Potts model is then deduced via the Edwards--Sokal coupling. 
\end{abstract}

\begin{figure}[H]
    \centering
    \includegraphics[width=0.6\textwidth]{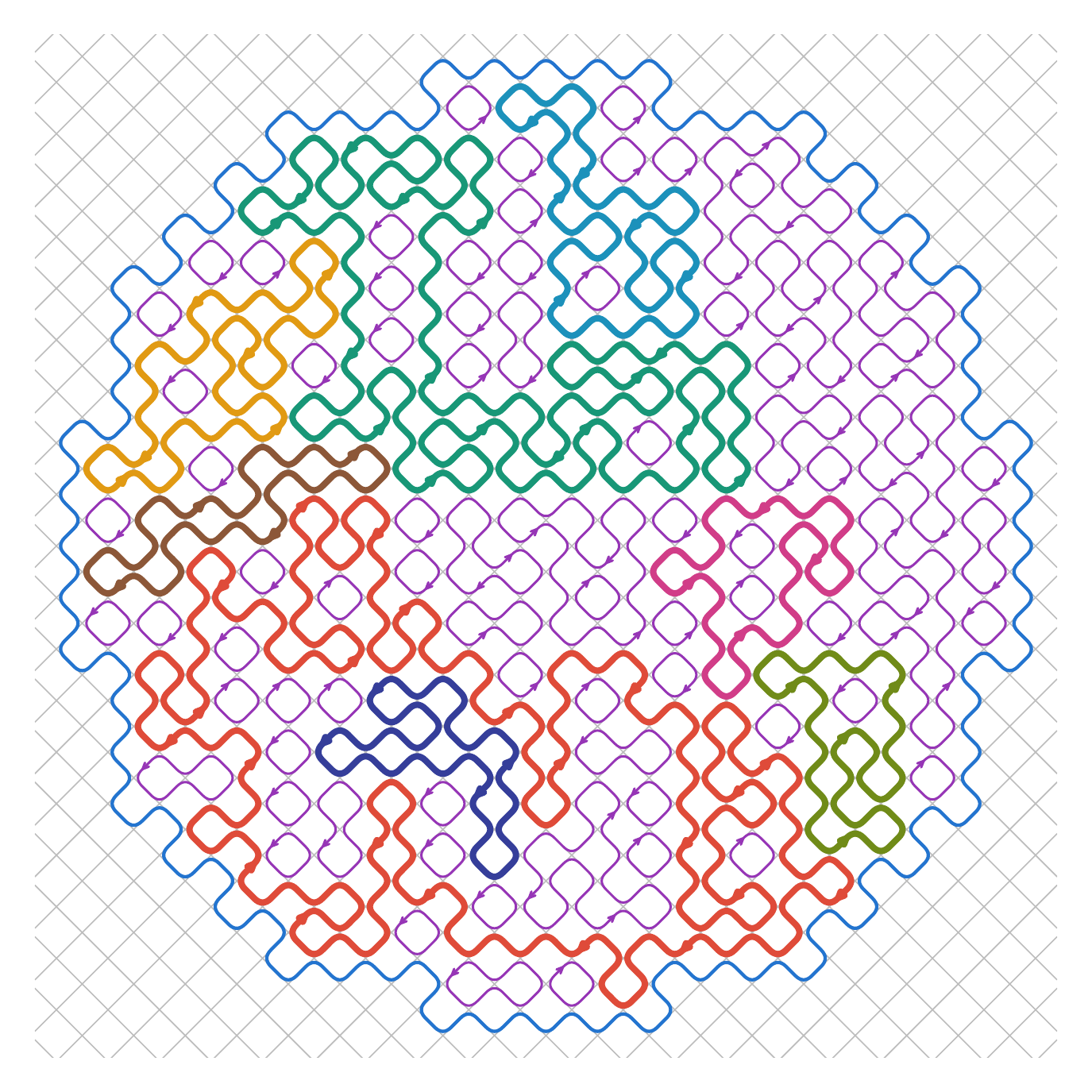}
    \caption{\footnotesize\textbf{The critical random-cluster model at $q=4$.}
    One may view the model as follows:
   fix a deterministic boundary loop on the medial lattice of $\bbZ^2$
   (a scaled, rotated version of the square lattice),
    and sample, uniformly at random, a collection of directed non-intersecting
    loops inside the domain, occupying all edges of the medial lattice.
    In this simulation, we highlighted the eight largest loops (in terms of enclosed area).
    Our result asserts that, in the scaling limit, the loops' law converges to CLE(4).
    }
    \label{fig:front_page}
\end{figure}


\newpage

\section{Introduction}

\subsection{Motivation}

Conformal Field Theory (CFT) has been a cornerstone of theoretical physics in the twentieth century \cite{BPZ84a,BPZ84b}. The introduction of the Schramm--Loewner Evolution (SLE) \cite{Sch00,Law05} and Conformal Loop Ensembles (CLE) \cite{She09,SheWer12} (see also \cite{MS16,MS20}), together with major advances in the mathematical foundations of CFT \cite{Dav+16,RV14,GRV19}, has profoundly reshaped our understanding of conformally invariant objects and two-dimensional CFTs.

Despite this progress, rigorous proofs of conformal invariance for lattice models remain remarkably scarce. Among the principal examples are: Bernoulli site percolation on the triangular lattice \cite{Smi01,KhrSmi21,CamNew07,CamNew06} (results range from Cardy's formula to convergence of interfaces and loop ensembles toward SLE$_6$ and CLE$_6$); the Ising and FK-Ising models \cite{Smi10,CheSmi12,HonSmi13,CheHonIzy15,CheDumHon12a,BenHon16} (major breakthroughs include the analysis of fermionic observables, energy and spin fields, and convergence toward SLE and CLE with parameters $3$ and $16/3$); uniform spanning trees \cite{LawSchWer04}; dimer models \cite{Ken00,Ken01}; and level lines of the discrete Gaussian free field \cite{SchShe09}.

More recently, progress towards a more systematic analysis of scaling limits of lattice models was achieved with the proof of conformal invariance of the six-vertex model for a continuum of parameters; see \cite{DumKozLamMan26} for the full-plane result and \cite{ADCHLM26} for the corresponding result in simply connected domains.
In the present paper, we build on \cite{ADCHLM26} to establish conformal invariance of the random-cluster model with cluster weight $q=4$ on the square lattice, as well as of the associated four-state Potts model. This represents a special case of the Ashkin-Teller model, which we plan to treat in the subsequent paper \cite{AveDumLamMan26}, generalizing the arguments in \cite{DumLisQia21,DumLisQia25}.

\subsection{Definition of the random-cluster model} Consider a finite subgraph $G=(V,E)$ of $\mathbb Z^2$. 
A percolation configuration $\omega$ on $G$ is an element of $\{0,1\}^E$. We say that an edge $e$ is \emph{open} if $\omega(e)=1$ and that it is  \emph{closed} if  $\omega(e)=0$. We can see a configuration $\omega$ as a subgraph of $G$ with the vertex-set $V$ and the edge-set $\{e\in E:\omega(e)=1\}$. 
A \emph{cluster} is a connected component of $\omega$ (possibly including isolated vertices). 
\begin{definition}\label{def.RC_model}
    The random cluster measure on $G$ with edge-weight $p\in(0,1)$, cluster-weight $q>0$ and free boundary conditions is given by
    \begin{equation*}
        \phi^0_{G,p,q}[\omega]:=\frac{1}{Z^0(G,p,q)} \left(\frac{p}{1-p}\right)^{|\omega|}q^{k\left(\omega\right)},
    \end{equation*}
    where $|\omega|:=\sum_{e\in E}\omega(e)$, $k(\omega)$ is the number of connected components of the graph $\omega$ that intersect $V$, and $Z^0(G,p,q)$ is the \emph{partition function} making $\phi^0_{G,p,q}$ a probability measure.
\end{definition}

Consider the dual graph $(\mathbb Z^2)^*:=(\tfrac12,\tfrac12)+\mathbb Z^2$. Note that every edge $e$ of $\mathbb Z^2$ has a dual-edge $e^*$ of $(\mathbb Z^2)^*$ crossing it in its middle. For a percolation configuration $\omega$ on $\mathbb Z^2$, let $\omega^*$ be the configuration on $(\mathbb Z^2)^*$ defined by the formula $\omega^*(e^*)=1-\omega(e)$ for every edge $e$. In words, a dual-edge is open if the corresponding edge of $\mathbb Z^2$ is closed, and vice-versa.

Consider now the medial graph $(\mathbb Z^2)^\diamond$ of the square lattice obtained by putting a medial vertex at the middle of each edge of $\mathbb Z^2$, and medial edges between nearest medial vertices. The graph is a rescaled (by a factor $1/\sqrt2$) and rotated (by an angle $\pi/4$) version of the square lattice. For a configuration $\omega$ on $\mathbb Z^2$, there is a unique way to introduce a loop configuration $\overline\omega$ on $(\mathbb Z^2)^\diamond$ in such a way that loops are drawn in the medial lattice, they turn at each medial vertex so that they do not cross any open edge of $\omega$ or $\omega^*$, and all medial edges are part of exactly one loop; see Figure~\ref{fig:primal dual medial}.

\begin{figure}
    \centering
    \includegraphics[width=0.6\textwidth]{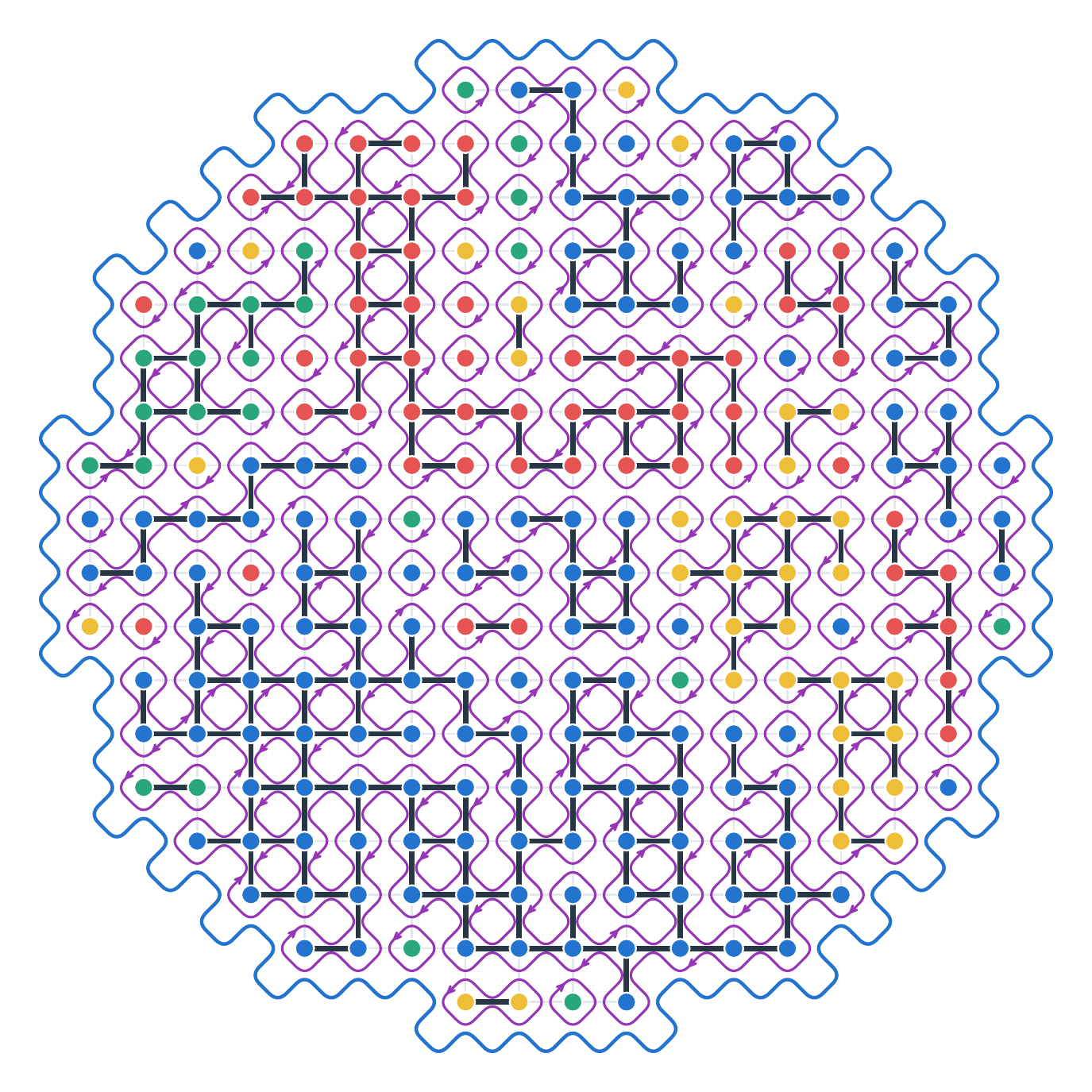}
    \caption{\textbf{Coupling of several models.}
    The \textbf{random-cluster model} is a percolation model on the square lattice $\mathbb Z^2$.
    This simulation has free boundary conditions.
    The \textbf{Potts model} is obtained by colouring its clusters uniformly at random with four colours.
    The \textbf{loops} live on the medial lattice and separate the primal clusters from the dual clusters (the dual clusters are not drawn).
    They are given an \emph{orientation} by flipping a fair coin for each loop.
    Notice that the random-cluster model and the \emph{unoriented} loops encode precisely the same information.}
    \label{fig:primal dual medial}
\end{figure}

In  what follows, we call  {\em discrete domain} any subgraph $G$ of $\mathbb Z^2$ obtained by taking all the vertices (of $\mathbb Z^2$) within a loop $\ell$ on the medial lattice $(\mathbb Z^2)^\diamond$ turning at every vertex, and all edges (of $\mathbb Z^2$) between these vertices not crossing the loop $\ell$. A discrete domain can be seen as a union of simply connected open sets of $\mathbb R^2$ by considering the points enclosed by the loop\footnote{The loops are considered as making small roundish turns at every medial vertex (cf.~Figure~\ref{fig:front_page}).}. The boundary $\partial G$ of $G$ is the set of vertices in $V$ that correspond to (medial) faces of $(\mathbb Z^2)^\diamond$ bordering the loop $\ell$.

We consider a percolation configuration $\omega$ on a finite graph $G=(V,E)$ with free boundary conditions as a configuration on the whole lattice $\mathbb Z^2$ by declaring all edges which are not in $E$ to be closed.  As a consequence, we may define $\overline\omega$ by considering the full-plane configuration and then  removing all the small loops surrounding vertices which are not in $V$.  With this definition, the configuration $\overline\omega$ is composed entirely of loops.

For more details on the random-cluster model, see e.g.~\cite{Gri06,Dum17a}.
\medbreak
{\em In this article, we focus on the cluster weight $q=4$. In such a case, it has been proved that the model undergoes a continuous phase transition at the self-dual point $p_c=2/3$ \cite{DumSidTas16}. In all that follows, we set $p:=p_c=2/3$.}

\subsection{Conformal invariance of the random-cluster model at \texorpdfstring{$q=4$}{q=4}}

The object of interest in the continuum will be the nested Conformal Loop Ensemble (CLE) with parameter $\kappa$ equal to 4. We refer to \cite{She09,SheWer12} for its definition. Here, we simply recall that the CLE(4) in a simply connected domain $\calD$ is a random collection of nested, non intersecting and non self-crossing continuous loops in  $\calD$ which is conformally invariant, in the sense that CLE(4) in $\Psi(\calD)$, where $\Psi$ is a conformal map, is the image by $\Psi$ of CLE(4) in $\calD$.

We say that a family of discrete domains 
$
\calD_\delta
$ on $\delta\mathbb Z^2$ converges to
$\calD
$
in the {\em Carath\'eodory sense} if there exist conformal bijections
$
f_\delta:\calD_\delta\longrightarrow\mathbb D
$ and $
f:\calD\longrightarrow\mathbb D
$
such that every compact subset of $\calD$ is contained in $\calD_\delta$
for all sufficiently small $\delta$, and $f_\delta$ converges to $f$
uniformly on every compact subset of $\calD$. 

\begin{theorem}[CLE(4) convergence of macroscopic loops (Figure~\ref{fig:front_page})]\label{thm:main}
Consider a simply connected domain $\calD$ and let $(\calD_\delta)$ be a family of discrete domains on $\delta\mathbb Z^2$ converging to $\calD$ in the Carath\'eodory sense. Let $\overline\omega_\delta$ be the loop configuration obtained from the random-cluster model on $\calD_\delta$ with cluster-weight $q=4$ and edge-weight $p=2/3$, with free boundary conditions. Then, $\overline\omega_\delta$ converges weakly (for the topology induced by the Camia-Newman distance\footnote{For a simply connected domain $\Omega$, let~$\mathfrak C=\mathfrak C(\Omega)$ be the collection of locally finite families~$\mathcal F$ of non-self-crossing loops in $\Omega$ that do not intersect each other. Define the metric on~$\mathfrak C$, 
\[
d_{\bf CN}(\mathcal F,\mathcal F')\le \varepsilon\ \Longleftrightarrow\ \Big(\begin{array}{c} \forall \gamma\in \mathcal F\text{ with }\gamma\subset B(0,1/\ep),\exists \gamma'\in \mathcal F
',d(\gamma,\gamma')\le \ep\\
\text{ and similarly when exchanging~$\mathcal F'$ and~$\mathcal F$}\end{array}\Big),
\]
where, for two loops~$\gamma_1$ and~$\gamma_2$, we set 
\begin{align}\label{eq:loop_dist}
	d(\gamma_1,\gamma_2):=\inf\sup_{t\in\mathbb S^1}|\gamma_1(t)-\gamma_2(t)|,
\end{align}
with the infimum running over all continuous one-to-one parametrizations of the loops~$\gamma_1$ and~$\gamma_2$ by~$\mathbb S^1$.

The distance~$d_{\bf CN}$ induces a distance on the space of probability measures on~$\mathfrak C$, which we also denote by~$d_{\bf CN}$: for~$\phi$ and~$\phi'$ two measures on~$\mathfrak C$, let
\begin{align}\label{eq:dis_law}
d_{\bf CN}(\phi,\phi') := \inf\big\{\exists \text{ coupling~$\bbP$ of~$\phi$ and~$\phi'$ with }	\mathbb P[d_{\bf CN}(\overline\omega,\overline\omega')> \ep] <\ep\big\}.
\end{align}}) to the CLE(4) in $\calD$. 
\end{theorem}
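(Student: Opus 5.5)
The strategy is to pass through the six-vertex model, using \cite{ADCHLM26} as the main input. At $q=4$ and $p=p_c=2/3$ the Baxter--Kelland--Wu correspondence identifies the loop representation $\overline\omega_\delta$ with oriented loops of weight $\sqrt q=2=e^{i\lambda}+e^{-i\lambda}$ with $\lambda=0$. Concretely, orient each loop of $\overline\omega_\delta$ by a fair coin. Then the law of the oriented loops is the uniform measure on orientations of the medial lattice inside the boundary loop $\ell_\delta$ in which each medial vertex has two incoming and two outgoing arrows (as in Figure~\ref{fig:front_page}). This is the six-vertex model with $a=b=c=1$, i.e.\ $\Delta=1/2$. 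Its height function $h_\delta$ on the faces of the medial lattice changes by $\pm1$ across each oriented loop. With free boundary conditions, $h_\delta$ is constant along $\partial\calD_\delta$.

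The first step is to take the result of \cite{ADCHLM26} at $\Delta=1/2$. It says that $h_\delta$, with constant boundary value, converges as $\delta\to0$ to $\frac1{\sqrt{g}}\,$GFF with Dirichlet boundary conditions in $\calD$, where $g$ is the appropriate coupling constant (at $\Delta=1/2$ the normalisation should match the CLE(4)--GFF coupling). The second step identifies the limit of the loops through the Miller--Sheffield and Schramm--Sheffield level-line coupling. In that coupling, the nested CLE(4) loops are the outermost level loops of the Dirichlet GFF at heights $\pm\lambda_0$, and the GFF restricted to the interior of each loop is again a Dirichlet GFF shifted by $\pm 2\lambda_0$. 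The outermost discrete loops of $\overline\omega_\delta$ are exactly the outermost jumps of $h_\delta$ from the boundary value. Inside each loop, by the domain Markov property of the random-cluster model (free boundary conditions inside a dual circuit) and the six-vertex model, we see the same model with constant boundary condition shifted by one. So we will show that any subsequential limit $(h,\Gamma)$ of the pair (height function, loop ensemble) is a coupling in which $\Gamma$ consists of level lines of $h$ with the correct Markov property. Uniqueness of such couplings (Schramm--Sheffield, Wang--Wu, Aru--Sep\'ulveda--Werner for the characterisation of CLE(4) as the ALE/two-valued set structure) then forces $\Gamma$ to be CLE(4). The scale is automatically correct because the jump of $h$ across a loop equals $2\lambda_0\sqrt g$ in the limit.

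Tightness in $d_{\bf CN}$ will come from RSW estimates for the critical $q=4$ random-cluster model that are uniform in boundary conditions (Duminil-Copin--Sidoravicius--Tassion, quantitative version at $q=4$). These give the Aizenman--Burchard criterion for loops, the absence of six-arm-type pinching, and the fact that only finitely many loops of diameter $>\ep$ occur. These estimates also ensure that macroscopic discrete loops do not merge or vanish in the limit. We also need that they do not touch the boundary except in the CLE(4)-compatible way (CLE(4) loops do not touch $\partial\calD$, consistent with free boundary conditions and the positive RSW estimates for dual circuits near the boundary).

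The main obstacle is the second step: showing that, in the limit, the loops really are the level lines of the limiting field and not some other sets where $h$ is merely discontinuous in a weak sense. A GFF is only a distribution, so "$h=\pm\lambda_0$ on $\Gamma$" has no pointwise meaning. My approach would be to use the local-set characterisation: for each subsequential limit, show that the region outside the outermost loops (an explicitly measurable set) is a local set for $h$, with harmonic extension equal to $\pm\lambda_0$ in the appropriate components. In the discrete setting this follows from the Markov property together with the convergence of \cite{ADCHLM26}, applied inside the (random) discrete subdomains via Carath\'eodory convergence of those subdomains. One must also control the boundary values near rough loop boundaries, and RSW-type boundary estimates would be used for this. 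Once this two-valued local set property holds, the Aru--Sep\'ulveda--Werner uniqueness theorem identifies the outermost loops as the $\mathbb A_{-\lambda_0,\lambda_0}$ set, i.e.\ CLE(4). Iterating inside the loops gives the nested CLE(4).
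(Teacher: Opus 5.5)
Your architecture is the paper's: the BKW coupling, the GFF limit from \cite{ADCHLM26}, the discrete two-valued decomposition \eqref{eq:discrete decomposition}, passage to the limit, and uniqueness of two-valued local sets. However, the step where $q=4$ actually enters is computed wrongly. The uniform measure on oriented \emph{loop} configurations does not project to the uniform measure on Eulerian orientations. At a $\mathbf c$-type (saddle) vertex the two incoming arrows sit on opposite edges, and both turning pairings are compatible with the arrows. At $\mathbf a$- and $\mathbf b$-type vertices only one pairing is compatible. The arrow marginal therefore carries weight $2^{N_5+N_6}$, i.e.\ $\mathbf a=\mathbf b=1$, $\mathbf c=2$, $\Delta=-1$. (The point $\mathbf c=1$, $\Delta=1/2$, corresponds under BKW to loop weight $-1$.)

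This error is not cosmetic. The discrete jumps across loops are exactly $\pm1$, so the limiting gasket is a thin local set on whose complement the harmonic extension of $\Gamma$ takes the values $\pm1/\sigma$, i.e.\ it is $\mathbb A_{-1/\sigma,1/\sigma}$. Such sets exist for every $1/\sigma\ge\lambda=\sqrt{\pi/8}$, and only $\mathbb A_{-2\lambda,2\lambda}$ is the CLE(4) gasket. So the scale is not ``automatically correct''. The proof needs the identity $\sigma\cdot 2\sqrt{\pi/8}=1$, which holds for the variance $\sigma^2=2/\pi$ of the $\mathbf c=2$ model in Theorem~\ref{thm:even}, and which you would have no reason to expect at $\Delta=1/2$. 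Relatedly, your final identification of CLE(4) with $\mathbb A_{-\lambda_0,\lambda_0}$ contradicts your own $\pm2\lambda_0$ shift. With $\lambda_0=\lambda$, $\mathbb A_{-\lambda,\lambda}$ is the arc loop ensemble, not CLE(4).

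A second ingredient is missing: thinness. The uniqueness result invoked (\cite{SepWer17}) concerns \emph{thin} local sets with values $\pm1$, so you must prove that the limiting gasket is thin; you never do. The paper gets this from RSW. A ball of radius $\ep$ at distance $r$ from $\partial\calD$ lies in the gasket only if no loop surrounds it, which happens with probability at most $(\ep/r)^c$. The computation of \cite[Lemma 6.10]{DumLisQia25} then gives thinness. The uniform-in-$\delta$ version of the same estimate is what lets one discard loops of size $o(1)$ when passing \eqref{eq:discrete decomposition} to the limit. Your RSW paragraph covers tightness but neither of these two uses.
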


\im{Are all domains considered here bounded? The discrete ones are, but what about the continuous ones. Is there any problem with the CN topology? }

Using the universality result of \cite{DumKozKraManOul26}, the theorem extends to every rectangular isoradial graph. We chose to state it in the case of the square lattice for simplicity. Previous results concerning the random-cluster model with cluster-weight $q=4$ include the derivation of the critical exponents \cite{Che26a}, as well as certain correlations in full plane \cite{CheXia26}. Both use the GFF convergence of \cite{DumKozLamMan26}.

The following corollary, based on our main result combined with mixing properties of the model, states that the multi-point connectivity correlations are conformally invariant. The exact expression of these correlations can be obtained in terms of CLE(4) correlations, but we do not include it here. 
Below, for a point $x\in\mathbb R^2$, let $[x]_\delta$ be the vertex of $\delta\mathbb Z^2$ closest to $x$ (in case of a tie we choose one of the closest vertices according to an arbitrary rule). 

\begin{corollary}[Conformal invariance of point-to-point correlations]\label{cor:conformal invariance point-to-point}
There exists $\kappa(\delta)$ such that $\lim_{\delta\rightarrow0}\frac{\log(\kappa(\delta))}{\log(1/\delta)}=1/8$ and the following holds. Consider a simply connected domain $\calD$ and $n\ge1$. For any partition $P=P_1\sqcup\dots\sqcup P_s$ of $\{1,\dots,n\}$, there exists a map $\Psi_{\calD}(P;\cdot)$ such that for every $x_1,\dots,x_n\in \calD$, 
\begin{equation}
\lim_{\delta\rightarrow0}\kappa(\delta)^{n}\phi_{\calD_\delta,2/3,4}^0[\mathcal E_\delta(P;x_1,\dots,x_n)]=\Psi_\calD(P;x_1,\dots,x_n),\end{equation}
where $\mathcal E_\delta(P;x_1,\dots,x_n)$ denotes the event that $[x_i]_\delta$ and $[x_j]_\delta$ are connected in $\omega_\delta$ if and only if $i$ and $j$ are in the same $P_t$ for some $1\le t\le s$.  

Furthermore, the family of functions $(\Psi_\calD(P;\cdot))_\calD$ is conformally covariant in the following sense: for every conformal map $f:\calD\rightarrow \mathbb C$ and every $x_1,\dots,x_n\in \calD$,
\begin{equation}
\Psi_\calD(P;x_1,\dots,x_n)=|f'(x_1)|^{1/8}\cdots|f'(x_n)|^{1/8}\Psi_{f(\calD)}(P;f(x_1),\dots,f(x_n)).\end{equation}
\end{corollary}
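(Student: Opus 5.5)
We would follow the approach used for percolation by Camia and Newman and for FK-Ising, which combines Theorem~\ref{thm:main} with quasi-multiplicativity and mixing estimates for the critical $q=4$ model. As normalisation we take $\kappa(\delta):=1/\phi^0_{\Lambda_1^\delta,2/3,4}[0\leftrightarrow \partial\Lambda_1^\delta]$, the inverse one-arm probability from the origin to distance $1$ in $\delta\mathbb Z^2$. The exponent statement $\log\kappa(\delta)/\log(1/\delta)\to 1/8$ is then the one-arm exponent at $q=4$. We would import it from the critical exponent computations (\cite{Che26a}), or derive it from Theorem~\ref{thm:main} and the known CLE(4) nesting statistics. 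Since the model at $q=4$ has only slowly varying corrections, we do not claim $\kappa(\delta)=\delta^{-1/8}$ exactly, only the logarithmic equivalence.

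\textbf{Step 1 (localisation).} Fix $r>0$ much smaller than the mutual distances of the $x_i$ and their distance to $\partial\calD$. The event $\mathcal E_\delta(P;x)$ is decomposed as follows. Each $[x_i]_\delta$ is connected to $\partial B(x_i,r)$, and the connectivity pattern among these $n$ macroscopic arms is prescribed by $P$.

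\textbf{Step 2 (decoupling).} Using the mixing property (ratio weak mixing) and quasi-multiplicativity from RSW at $q=4$, which is available since the transition is continuous \cite{DumSidTas16}, we would show
\[
\phi^0_{\calD_\delta}[\mathcal E_\delta]=(1+o_{r}(1))\prod_{i}\pi_\delta(r)\;\phi^0_{\calD_\delta}\big[\mathcal F_r(P)\,\big|\,\text{arms}\big].
\]
Here $\pi_\delta(r)$ is the one-arm probability to distance $r$, and $\mathcal F_r(P)$ is the event, measurable with respect to the macroscopic configuration, that the clusters reaching each $\partial B(x_i,r)$ connect according to $P$. A separation-of-arms argument makes the conditioning harmless: conditionally on the arm existing, the cluster hits $\partial B(x_i,r)$ in a well-spread way.

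\textbf{Step 3 (scaling limit).} The ratio $\pi_\delta(r)/\pi_\delta(1)$ converges to $r^{1/8}$ times a constant. This follows from Theorem~\ref{thm:main} in the full-plane/annulus setting: the nested loops around $x_i$ converge to CLE(4), and the probability of crossing $\log(1/r)$ worth of nesting scales has an explicit CLE limit. The probability of $\mathcal F_r(P)$ is a function of the macroscopic loops. Macroscopic connectivity is determined by which loops separate which points, and this determination is stable under $d_{\bf CN}$ convergence thanks to RSW: no near-touching of loops occurs without an actual connection, by six-arm type bounds. Hence this probability converges to a CLE(4) quantity. Letting $r\to0$ with a Cauchy argument gives $\Psi_\calD$.

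\textbf{Step 4 (covariance).} Under a conformal map $f$, CLE(4) is mapped to CLE(4) by Theorem~\ref{thm:main}. The ball $B(x_i,r)$ is mapped to approximately $B(f(x_i),|f'(x_i)|r)$. The change in scale multiplies the one-arm factor by $|f'(x_i)|^{1/8}$ in the limit, which yields the stated identity.

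The main obstacle is Step 2/3 at $q=4$. The convergence to CLE(4) controls only the macroscopic loops, while $\kappa(\delta)$ involves microscopic scales. The point-to-macroscopic probability must factor asymptotically exactly, not just up to constants, which requires a coupling showing that the law of the cluster near $x_i$, conditioned on reaching distance $r$, becomes independent of the outside. We would first try the coupling approach of Garban--Pete--Schramm, based on iterated annuli and a positive-probability coupling per scale from RSW. The delicacy at $q=4$ is that RSW bounds hold only with boundary-condition-dependent constants. We would handle this using the uniform crossing estimates available at $q=4$ \cite{DumSidTas16}, restricted to annuli far from the boundary.
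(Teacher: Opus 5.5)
The step that fails is your normalization, and the failure is specific to $q=4$. You take $\kappa(\delta):=1/\phi^0_{\Lambda_1^\delta,2/3,4}[0\leftrightarrow\partial\Lambda_1^\delta]$, which normalizes by an arm that must end on a \emph{free} boundary. At $q=4$, macroscopic clusters do not reach a free boundary in the scaling limit. A cluster joining $\Lambda_{2\eps}$ to $\partial\Lambda_1^\delta$ has an exterior-most loop of macroscopic diameter passing within $O(\delta)$ of $\partial\Lambda_1$. CLE(4) loops a.s.\ do not touch the boundary of the domain, so Theorem~\ref{thm:main} applied in $\Lambda_1$ gives $\phi^0_{\Lambda_1^\delta}[\Lambda_{2\eps}\leftrightarrow\partial\Lambda_1^\delta]\to0$ as $\delta\to0$, for each fixed $\eps>0$. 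Now $\{0\leftrightarrow\partial\Lambda_1^\delta\}\subset\{0\leftrightarrow\partial\Lambda_\eps^\delta\}\cap\{\Lambda_{2\eps}\leftrightarrow\partial\Lambda_1^\delta\}$. Condition on the edges outside $\Lambda_\eps$, dominate the inside by wired boundary conditions, and compare arm probabilities using bulk RSW. This yields
\[
\phi^0_{\Lambda_1^\delta}[0\leftrightarrow\partial\Lambda_1^\delta]\le C(\eps)\,\phi^0_{\Lambda_1^\delta}[0\leftrightarrow\partial\Lambda_{1/2}^\delta]\,\phi^0_{\Lambda_1^\delta}[\Lambda_{2\eps}\leftrightarrow\partial\Lambda_1^\delta],
\]
with $C(\eps)$ independent of $\delta$. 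So your $\kappa(\delta)$ is much larger than the inverse of the bulk one-arm probability, and $\kappa(\delta)\pi_\delta(r)\to\infty$ for every fixed $r$. The ``constant'' in your Step~3 is therefore $+\infty$, and $\kappa(\delta)^n\phi^0_{\calD_\delta}[\mathcal E_\delta]$ diverges whenever the corresponding CLE(4) probability is positive. This is exactly the boundary-condition sensitivity you single out as the main obstacle. It is harmless in the bulk, where crossing estimates are uniform in boundary conditions, but fatal for a normalization anchored on a free boundary.

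The repair is to normalize by a bulk quantity, e.g.\ $\kappa(\delta):=1/\phi^0_{\delta\bbZ^2,2/3,4}[0\leftrightarrow\partial\Lambda_1]$ with the infinite-volume measure. The paper does the same: its $\kappa(\delta)$ is built from the full-plane quantity $\phi^0_{\delta\bbZ^2,2/3,4}[0\leftrightarrow\infty\,|\,B_\eps\leftrightarrow\infty]$. Mixing and scale invariance then give $\pi_\delta(r)/\pi_\delta(1)\to r^{-1/8}$, not $r^{1/8}$, since the ratio exceeds $1$ for $r<1$.

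With that change your outline is the paper's proof. There, the mixing property of \cite{Che26a} is used as a black box to write $\phi^0_{\calD_\delta}[\mathcal E_\delta]$ as $n$ full-plane point-to-$\eps$-ball factors times the macroscopic probability $\phi^0_{\calD_\delta}[\mathcal E^\eps_\delta]$. The latter converges by Theorem~\ref{thm:main}, $\eps$ is sent to $0$ along a slowly decaying sequence $\eps_\delta$, and the exponent is taken from \cite{Che26a}. Your Garban--Pete--Schramm-type coupling is a plan for proving that mixing input rather than citing it. Your fixed-$r$-then-Cauchy argument is a slightly cleaner way to obtain existence of the limit.

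Like the paper, your Step~1 tacitly requires every block of $P$ to have at least two elements, or reads the arms into $\mathcal E_\delta$. For a singleton block the event forces no arm, and for $n=1$ the normalized quantity is simply $\kappa(\delta)\to\infty$.
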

%
%
%

\subsection{Conformal invariance of the  Potts model}

 For a finite subgraph $G=(V,E)$ of $\bbZ^2$, attribute a {\em spin} variable $\sigma_x$ belonging to $\{1,\dots,q\}$ to each vertex $x\in V$. A
 {\em spin configuration} $\sigma=(\sigma_x:x\in V)\in\{1,\dots,q\}^{V}$ is given by the collection of all the spins.
  Introduce the Hamiltonian of $\sigma$ defined by
$$H_G(\sigma):=-\sum_{xy\in E}\,\mathbbm{1}[\sigma_x =\sigma_y].$$
The {\em Gibbs measure on $G$ at inverse temperature $\beta\ge0$ with free boundary conditions} is defined by the formula
\begin{equation}\label{eq:Gibbs}\mu_{G,\beta,q}[\sigma]:=\frac{1}{Z(G,\beta,q)}\exp[-\beta H_{G}(\sigma)],\end{equation}
for every $\sigma\in\{1,\dots,q\}^{V}$, and $Z(G,\beta,q)$ is the partition function making the previous measure a probability measure. 

This model was introduced as a generalization of the Ising model to more than two possible spins by Potts in 1952 \cite{Pot52} following a suggestion of his adviser Domb. While the model received little attention early on, it became an object of great interest in the last forty years. Since then, mathematicians and physicists have been studying it intensively, and a lot is known on its rich behavior.

The Potts model with $q$ states is related to the random-cluster model with cluster-weight $q$ via the Edwards--Sokal coupling ; see e.g.~\cite{Gri06,Dum17a}. The Edwards--Sokal coupling has many applications. Here, we mention the simplest statement, and leave more evolved statements to the reader.
Below, we focus on the case $q=4$. It is easy to deduce from the critical point of the random-cluster model that the $4$-state Potts model undergoes a continuous phase transition at $\beta_c=\log(3)$. Recall that $\kappa(\delta)$ is given by Corollary~\ref{cor:conformal invariance point-to-point} and satisfies $\lim_{\delta\rightarrow0}\frac{\log(\kappa(\delta))}{\log(1/\delta)}=1/8$.

\begin{corollary}\label{cor:conformal invariance Potts}
Consider a simply connected domain $\calD$.
Fix a tuple of spins $\tau\in\{1,\dots,q\}^n$.
There exists $\widetilde\Psi_{\calD}(\tau;\cdot)$ such that for every $x_1,\dots,x_n\in \calD$,\begin{equation}
\lim_{\delta\rightarrow0}\kappa(\delta)^{n}\mu_{\calD_\delta,\beta_c,4}\!\left[\prod_{i=1}^n\left(\mathbbm{1}[\sigma_{[x_i]_\delta}=\tau_i]-\tfrac14\right)\right]=\widetilde \Psi_\calD(\tau;x_1,\dots,x_n).
\end{equation}
Furthermore, the family of functions $(\Psi_\calD(P;\cdot))_\calD$ is conformally covariant in the following sense: for every conformal map $f:\calD\rightarrow \mathbb C$ and every $x_1,\dots,x_n\in \calD$,
\begin{equation}
\widetilde\Psi_\calD(P;x_1,\dots,x_n)=|f'(x_1)|^{1/8}\cdots|f'(x_n)|^{1/8}\widetilde\Psi_{f(\calD)}(P;f(x_1),\dots,f(x_n)).\end{equation}

\end{corollary}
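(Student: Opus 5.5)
The plan is to reduce the statement to Corollary~\ref{cor:conformal invariance point-to-point} through the Edwards--Sokal coupling. At $\beta_c=\log 3$ we have $1-e^{-\beta_c}=2/3=p_c$. So the measure $\mu_{\calD_\delta,\beta_c,4}$ is the spin marginal of the coupling in which we sample $\omega\sim\phi^0_{\calD_\delta,2/3,4}$ and then colour each cluster independently and uniformly with one of four colours. Conditionally on $\omega$, spins in distinct clusters are independent and uniform, and spins within one cluster are equal.

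\textbf{Conditional expectation given $\omega$.} Write $f_i(\sigma):=\mathbbm{1}[\sigma_{[x_i]_\delta}=\tau_i]-\tfrac14$; under the uniform law each $f_i$ has mean zero. Fix $\omega$ and let $P(\omega)$ be the partition of $\{1,\dots,n\}$ induced by which of the points $[x_i]_\delta$ lie in the same cluster. Conditional independence across clusters gives
\[
\mathbb E\Big[\prod_i f_i\,\Big|\,\omega\Big]=\prod_{t}c(\tau|_{P_t}),\qquad c(\tau|_{B}):=\frac14\sum_{a=1}^4\prod_{i\in B}\Big(\mathbbm{1}[a=\tau_i]-\tfrac14\Big),
\]
where $t$ ranges over the blocks of $P(\omega)$. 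This is a deterministic number $C(\tau,P)$ depending only on $\tau$ and $P$, and it vanishes whenever some block is a singleton. Two points need care. First, if two of the $x_i$ coincide at scale $\delta$, they simply share a cluster; for distinct $x_i$ and small $\delta$ this does not happen. Second, the measure $\mu$ here has free boundary conditions, so no boundary colour has to be tracked.

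\textbf{Summing over partitions.} Averaging over $\omega$ gives the finite sum
\[
\mu\Big[\prod_i f_i\Big]=\sum_{P}C(\tau,P)\,\phi^0_{\calD_\delta,2/3,4}\big[\mathcal E_\delta(P;x_1,\dots,x_n)\big].
\]
Multiplying by $\kappa(\delta)^n$ and applying Corollary~\ref{cor:conformal invariance point-to-point} to each term, the limit exists and equals
\[
\widetilde\Psi_\calD(\tau;x)=\sum_P C(\tau,P)\,\Psi_\calD(P;x).
\]
Conformal covariance then follows by linearity. The coefficients $C(\tau,P)$ do not depend on the domain, and each $\Psi_\calD(P;\cdot)$ transforms with the same factor $\prod_i|f'(x_i)|^{1/8}$. (The statement writes $\widetilde\Psi_\calD(P;\cdot)$ where it should read $\widetilde\Psi_\calD(\tau;\cdot)$.)

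\textbf{Main obstacle.} The only genuine input is Corollary~\ref{cor:conformal invariance point-to-point}, which I take as given. Within the present argument, the step needing the most care is verifying the Edwards--Sokal identity at the correct parameters, namely that $p=1-e^{-\beta}$ with $\beta_c=\log 3$ gives exactly $2/3$. I would also check that the normalisation by $\kappa(\delta)^n$ is consistent for every partition, including those with singleton blocks. Those terms carry coefficient zero, so they drop out, and this is exactly why the centred observables $\mathbbm{1}[\cdot]-\tfrac14$ are used.
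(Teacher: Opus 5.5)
Your proposal is correct and is exactly the argument the paper intends; the paper's proof is the single line ``Edwards--Sokal coupling plus Corollary~\ref{cor:conformal invariance point-to-point}'', and you fill it in by conditioning on $\omega\sim\phi^0_{\calD_\delta,2/3,4}$, writing the centred spin correlation as $\sum_P C(\tau,P)\,\phi^0_{\calD_\delta,2/3,4}[\mathcal E_\delta(P;x_1,\dots,x_n)]$ with domain-independent coefficients, and passing to the limit and the covariance termwise. Keep your observation that $C(\tau,P)=0$ whenever $P$ has a singleton block explicit: for such $P$ the quantity $\kappa(\delta)^n\phi^0_{\calD_\delta,2/3,4}[\mathcal E_\delta(P;\cdot)]$ cannot have a finite limit in general (for the all-singletons partition the probability tends to $1$ while $\kappa(\delta)\to\infty$), so Corollary~\ref{cor:conformal invariance point-to-point} is needed, and meaningful, only for partitions without singletons.
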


\section{Proof of Theorem~\ref{thm:main}}

The proof relies on a coupling between the random-cluster model and the height function of the six-vertex model. We start by defining the six-vertex model. Then, we recall the convergence result obtained in \cite{ADCHLM26}. Finally, we prove our result. 

\begin{figure}
    \centering
    \includegraphics[width=0.6\textwidth]{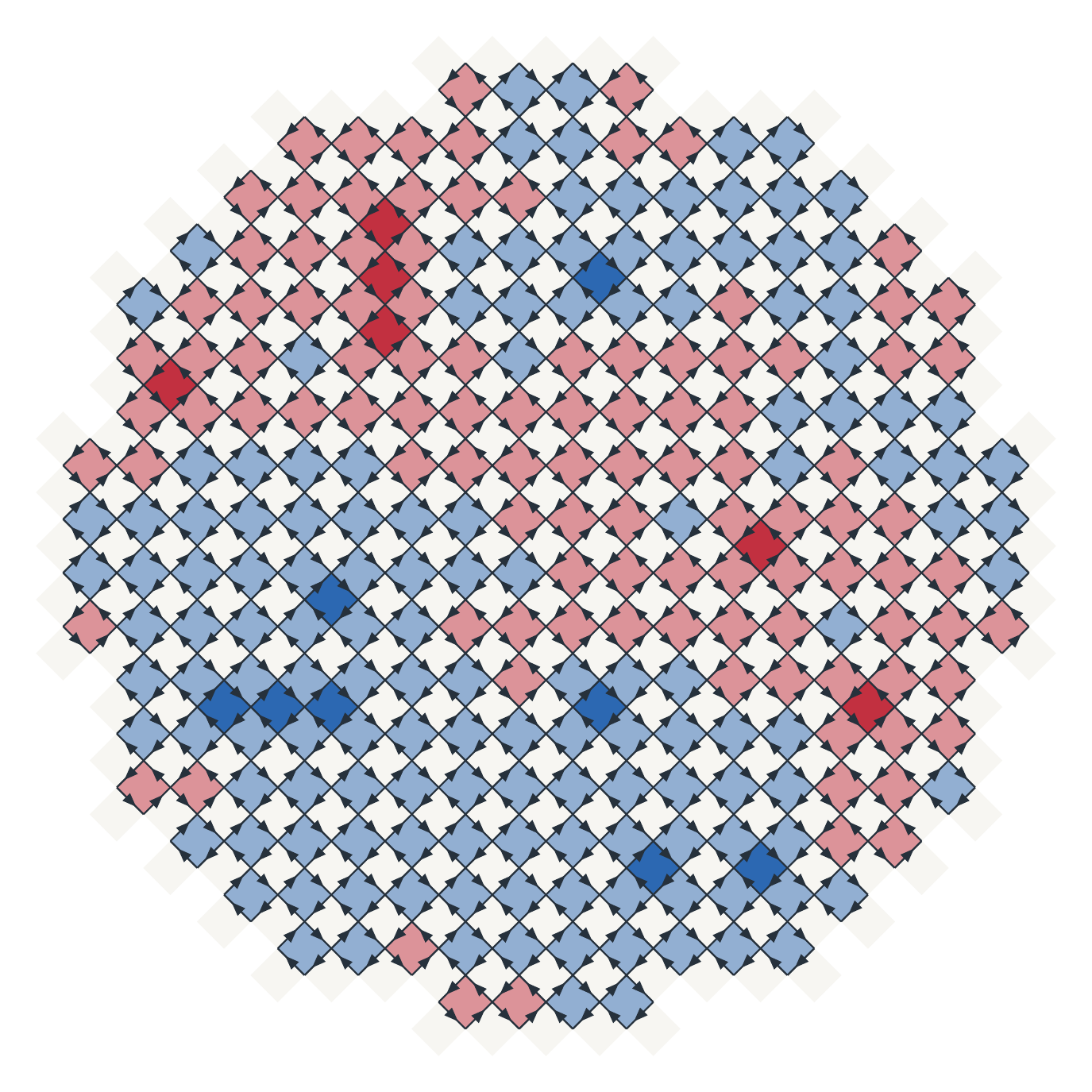}
    \caption{The \textbf{six-vertex configuration} is obtained from the \emph{oriented} loops
    by ``forgetting'' the way that the oriented edges are wired up into loops;
    see Figure~\ref{fig:primal dual medial} for the same simulation \emph{with} the orientations.}
    \label{fig:six_vertex_bkw}
\end{figure}

\subsection{The six-vertex model }

The six-vertex model is a probability measure on orientations of the edges of a graph \cite{Bax82,LieWu80}. A six-vertex configuration $\omega^\times$ on a subgraph of the medial lattice $(\mathbb Z^2)^\diamond$ assigns an orientation to each edge, subject to the \emph{ice rule}: at every vertex of degree four, the number of incoming edges equals the number of outgoing edges. In particular, there are precisely six local arrow configurations compatible with this constraint, as depicted in Figure \ref{fig:1}. The six local configurations are grouped into three pairs, to which we assign positive weights $\mathbf a$, $\mathbf b$, and $\mathbf c$, referred to below as {\em Boltzmann weights}.

The six-vertex model is often considered with periodic boundary conditions.
In this paper, however, we work exclusively in finite domains with fixed boundary conditions. For a discrete domain $G$, let $G^\diamond$ be the collection  $F$ of faces of $(\delta\mathbb Z^2)^\diamond$ which are enclosed by the medial loop defining the boundary of $G$. Note that the degree of vertices in $G^\diamond$ is two or four. 

 Every six-vertex configuration $\omega^\times$ on $G^\diamond$ naturally gives rise to an integer-valued \emph{height function} $h$ defined on $F$. It is piecewise constant on each face and its increments between neighboring faces are determined by the orientation of the edge separating them: when crossing an edge of $\omega^\times$, the height increases by $1$ if the crossed arrow points to the right and decreases by $1$ otherwise. 

\begin{figure}
\centering
\includegraphics[width=0.8\linewidth]{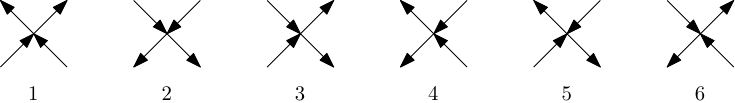}
\caption{The six possible local arrow configurations around a square lattice vertex of the six-vertex model.
These local configurations get the weights $\mathbf a$, $\mathbf a$, $\mathbf b$, $\mathbf b$, $\mathbf c$, $\mathbf c$ respectively.
The $\mathbf c$-type vertices are special; they correspond to saddle points of the height function.}
\label{fig:1}
\end{figure}

The ice rule ensures that these local increments are compatible around every vertex and can therefore be integrated to define a height function on $F$. The height function is determined only up to a global additive constant, whereas its gradient is uniquely determined by $\omega^\times$. Conversely, the arrow configuration can be recovered from the height increments between neighboring faces. Thus, modulo an additive constant, the height function and the six-vertex configuration encode the same information.

Consider a discrete domain $G$ and recall the definition of $\partial G$, i.e.~the vertices of $\mathbb Z^2$ (and therefore the associated medial faces) that are adjacent to the medial loop defining the boundary of $G$.
The six-vertex measure on $G^\diamond$ with \emph{Dirichlet boundary condition 0} is defined by
\begin{equation}
\label{eq:6V_def}
\bbP^0_{G,\mathbf a,\mathbf b,\mathbf c}[\omega^\times]
:=
\frac{1}{Z^0_{G,\mathbf a,\mathbf b,\mathbf c}}
\mathbf a^{N_1+N_2}
\mathbf b^{N_3+N_4}
\mathbf c^{N_5+N_6}
\mathbbm{1}_{h_{|\partial G}=0},
\end{equation}
where $N_i=N_i(\omega^\times)$ denotes the number of degree-four vertices whose local arrow configuration is of type $i$. The normalizing constant $Z^0_{G,\mathbf a,\mathbf b,\mathbf c}$ is the corresponding \emph{partition function}, chosen so that \eqref{eq:6V_def} defines a probability measure. 
\medbreak
{\em Below, we restrict to the parameters $\mathbf a=\mathbf b=1$ and $\mathbf c=2$.}

\subsection{Scaling limit of the six-vertex height function}

Consider a simply connected domain $\calD$. Let
$ G_{\calD}:D_2(\calD):=\{(u,v)\in \calD^2:u\ne v\}\longrightarrow\mathbb R
$
denote the Dirichlet Green function in $\calD$ defined as follows: for each fixed $y\in\calD$, the function $x\mapsto G_{\calD}(x,y)$ is harmonic on $\calD\setminus\{y\}$, vanishes as $x$ approaches $\partial\calD$, and satisfies
\begin{equation}
    G_{\calD}(x,y)
    =
    -\frac{1}{2\pi}\log|x-y|+O(1)
    \qquad\text{as }x\to y.
\end{equation}

By a \emph{finite-Dirichlet-energy generalized test function} we mean a finite signed measure $\varphi$ supported in $\calD$ such that
\begin{equation}
    \iint_{\calD\times\calD}
    G_{\calD}(u,v)\,{\rm d}\varphi(u)\,{\rm d}\varphi(v)
    <\infty.
\end{equation}
The Gaussian Free Field (GFF) is a centered Gaussian generalized function $\Gamma$ with covariance kernel $G_{\calD}$; see \cite{BerPow25,She07}. More precisely, it is a random distribution, in the sense that it can be tested against a finite-Dirichlet-energy generalized test function $\varphi$ by setting
\begin{equation}
   \langle\Gamma,\varphi\rangle:= \int_{\calD}
   \Gamma(u){\rm d}\varphi(u).
\end{equation}
Then,
for every finite family
$
   \varphi=(\varphi_1,\ldots,\varphi_n)
$ of finite-Dirichlet-energy generalized test functions, 
\begin{equation}
    \bigl(\langle\Gamma,\varphi_i\rangle\bigr)_{i=1}^n
    \sim
    \mathcal N\bigl(0,\Sigma(\varphi)\bigr),
\end{equation} 
where 
$\Sigma(\varphi)$ is the  $n\times n$ matrix defined by
\begin{equation}
    \Sigma(\varphi)_{ij}
    :=
    \iint_{\calD\times\calD}
    G_{\calD}(u,v)\,
    {\rm d}\varphi_i(u)\,
    {\rm d}\varphi_j(v).
\end{equation}

The following convergence result for the six-vertex height function was obtained in \cite{ADCHLM26} and is the basis of our proof.

\begin{theorem}[GFF convergence of the six-vertex height function \cite{ADCHLM26}]\label{thm:even}
Consider the six-vertex model with weights $\mathbf a=\mathbf b=1$ and $\mathbf c=2$.  Let $\calD$ be a simply connected domain. For a family $(\calD_\delta)$ of discrete domains converging to
$\calD$
in the Carath\'eodory sense, the law of the height function $h_\delta\sim\bbP_{\calD_\delta^\diamond,1,1,2}^0$ of the six-vertex model associated with $\omega\sim\phi_{\calD_\delta,2/3,4}^0$,  converges to the GFF $\sigma\cdot\Gamma$ with zero boundary conditions on $\calD$, and variance $\sigma^2=\frac2\pi$.\end{theorem}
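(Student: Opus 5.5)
The proof splits into two parts. The first is an exact combinatorial identification of the height function of the oriented random-cluster loops with the six-vertex model at $\mathbf a=\mathbf b=1$, $\mathbf c=2$. The second is the analytic GFF convergence, which is the content of \cite{ADCHLM26} and builds on the full-plane result \cite{DumKozLamMan26}.

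\textbf{Step 1: the Baxter--Kelland--Wu coupling.} Sample $\omega\sim\phi^0_{\calD_\delta,2/3,4}$ and orient each loop of $\overline\omega$ by an independent fair coin. At $p=p_c$ the weight $(p/(1-p))^{|\omega|}q^{k(\omega)}$ is, by Euler's formula, proportional to $\sqrt q^{\,\#\text{loops}}$, so here to $2^{\#\text{loops}}$. Splitting $2=1+1$ over the two orientations of each loop (the local-turning trick with $\lambda=0$ when $\sqrt q=2$) gives a uniform measure on oriented loop configurations.

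Forgetting how the oriented edges are wired into loops yields an arrow configuration satisfying the ice rule. Its weight is obtained by summing over the two admissible resolutions at each vertex:
\begin{itemize}[label=$\bullet$]
\item an $\mathbf a$- or $\mathbf b$-type vertex admits exactly one resolution compatible with the orientations, so it gets weight $1$;
\item a $\mathbf c$-type vertex admits both resolutions, so it gets weight $2$.
\end{itemize}
I would then check the boundary condition. The height increments across the loops encode their orientations, and the faces indexed by $\partial G$ must all carry the same height, so that after normalisation $h|_{\partial G}=0$. This is bookkeeping with the free boundary condition, using that the outermost loops are closed.

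\textbf{Step 2: tightness and subsequential limits.} I would prove uniform bounds on $\mathbb E[\langle h_\delta,\varphi\rangle^2]$ for smooth compactly supported $\varphi$. The tools are:
\begin{itemize}[label=$\bullet$]
\item the logarithmic variance estimates for the six-vertex height function at $\mathbf c=2$, which reflect the delocalised regime at $q=4$;
\item the fact that $h_\delta$ at a point is, up to sign, a sum of independent $\pm1$ over the loops surrounding it.
\end{itemize}
This makes the variance equal to the expected number of surrounding loops, which is of order $\log(1/\delta)$ at the scale of $\delta$ but gives a finite limit when $h_\delta$ is tested against $\varphi$. Tightness in a negative Sobolev space then follows.

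\textbf{Step 3: identification of the limit.} In the bulk, I would use the full-plane GFF convergence of \cite{DumKozLamMan26} together with the spatial Markov and mixing properties of the random-cluster model. These show that any subsequential limit is locally a GFF with the correct variance $\sigma^2=2/\pi$, since the constant is fixed by the full-plane result. Gaussianity and the harmonic structure would come from checking that limiting moments satisfy Wick's formula.

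\textbf{Step 4: the boundary (main obstacle).} The hardest step is showing that the covariance is exactly the Dirichlet Green function $G_\calD$, that is, that the limit has zero boundary values for rough domains converging only in the Carath\'eodory sense. I would first try to show that the loops touching a neighbourhood of $\partial\calD_\delta$ contribute negligibly to the height in the bulk, using crossing estimates in the free boundary condition. Combined with the domain Markov property of the height function, this would characterise the limit as the unique field that is locally the full-plane GFF and has conformally invariant zero boundary data, hence $\sigma\Gamma$.
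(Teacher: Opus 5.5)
\textbf{Gap: Steps 3--4 do not identify the limit.} The paper does not prove this statement; it imports it as a black box from \cite{ADCHLM26}, and the substance of that work is exactly what your Steps 3--4 leave open. Step 2 is fine. Conditionally on the loops, $h_\delta(x)$ is a sum of independent signs over the loops surrounding $x$, so $\mathbb E[h_\delta(x)h_\delta(y)]$ is the expected number of loops surrounding both points. RSW bounds this by $O(1+\log^+(1/|x-y|))$, and tightness follows. Step 1 is the BKW coupling that the paper states separately, up to the boundary caveat at the end. The problem is the uniqueness principle on which Step 4 rests, namely that the limit is ``the unique field that is locally the full-plane GFF and has conformally invariant zero boundary data''. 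This is false. Take $\sigma\Gamma+Zf$, with $Z$ an independent standard Gaussian and $f$ a smooth, compactly supported, non-harmonic function: it has the small-scale behaviour of the full-plane GFF at every point and vanishes on $\partial\calD$. Zero boundary data is trivially conformally invariant, so that qualifier adds nothing; conformal invariance of the \emph{law} of the limit is the conclusion, not an input. What would single out $\sigma\Gamma$ is a continuum domain Markov property: in every subdomain $U$, the field equals the harmonic extension of its values off $U$ plus an independent zero-boundary field of the same family. Your outline gives no way of obtaining this.

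None of the tools you list produces that property, or Gaussianity. The discrete Markov property only says that, given the heights along a contour, the inside is a six-vertex model with that rough, random boundary data. The splitting into ``harmonic extension + independent zero-boundary field'' is automatic for Gaussian models. For the six-vertex model, proving it in the limit requires the scaling limit with general boundary data, which is a stronger form of the very statement. Full-plane convergence plus mixing identifies the field at best locally, modulo a random function that is essentially constant on small balls. It is blind to the macroscopic covariance, which is carried by the large loops and the boundary. Checking Wick's formula through the BKW representation means computing joint nesting statistics of macroscopic loops around several points, i.e.\ CLE(4) information. This paper \emph{derives} that information from the statement, in the proof of Theorem~\ref{thm:main}, so that route is circular.

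Finally, the boundary check in Step 1 is not mere bookkeeping and fails as written. Under $\phi^0_{G,2/3,4}$, suppose $x,y\in\partial G$ lie in distinct clusters. Each is enclosed only by the outer boundary $\gamma_x$, resp.\ $\gamma_y$, of its own cluster: the loop through the medial edge separating it from the exterior dual face. Hence $h(x)-h(y)=\xi_{\gamma_x}-\xi_{\gamma_y}$ with independent signs, and $h$ is not constant on $\partial G$ (e.g.\ on the event that all edges are closed). The exact identity with $h_{|\partial G}=0$ needs wired boundary conditions, or free ones with the Dirichlet condition placed on the dual faces just outside the boundary loop. This is easy to fix and harmless in the limit by self-duality, but it has to be set up that way.
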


\subsection{The Baxter--Kelland--Wu correspondence}

 There is a formal correspondence, called the Baxter--Kelland--Wu (BKW) correspondence, between the loop configurations and the arrow configurations. We refer to \cite{BKW76} or the more recent expositions \cite{Dub11,DumGagHarManTas20,AveCheDumHeKraManXia26} for details.
In the case of $\mathbf a=\mathbf b=1$ and $\mathbf c=2$, it is a probabilistic coupling between the six-vertex model and the random-cluster model, that we now state.  
 
 Consider a discrete domain $G$. Consider the uniform measure $\vec{\mathbb  P}_G$ on oriented loop configurations $\vec\omega$ on $G^\diamond$, i.e.~loop configurations $\overline\omega$ for which each loop receives an orientation, clockwise or counterclockwise. Then, the BKW coupling states the following two observations.
 \begin{itemize}
 \item If $\overline\pi$ denotes the projection from oriented loop configurations to loop configurations consisting in forgetting orientations of loops, then
 \begin{equation}\phi_{G,2/3,4}^0=\overline\pi_*\vec{\mathbb  P}_G.\end{equation}
 (See Figure~\ref{fig:primal dual medial}.)
 \item If $\vec\pi$ denotes the projection from oriented loop configurations to arrow configurations consisting in forgetting the loops and simply keeping the orientations of edges induced by the orientations of the loops\footnote{Note that when considering the BKW coupling starting from a configuration $\omega\sim\phi_{G,2/3,4}^0$, the corresponding boundary conditions of the six-vertex height function are constant in the following sense: the height function restricted to medial faces corresponding to vertices of $\partial G$, is constant.}, and setting the height function to be zero on $\partial G$, then
 \begin{equation}\mathbb P^0_{G,1,1,2}=\vec\pi_*\vec{\mathbb  P}_G.\end{equation}
 (See Figure~\ref{fig:six_vertex_bkw}.)
 \end{itemize}

\subsection{Proof of Theorem~\ref{thm:main}}

The next lemma states a natural decomposition coming from the coupling between the height function and the loop configuration.

\begin{figure}[th!]
    \centering
    \includegraphics[width=0.55\textwidth]{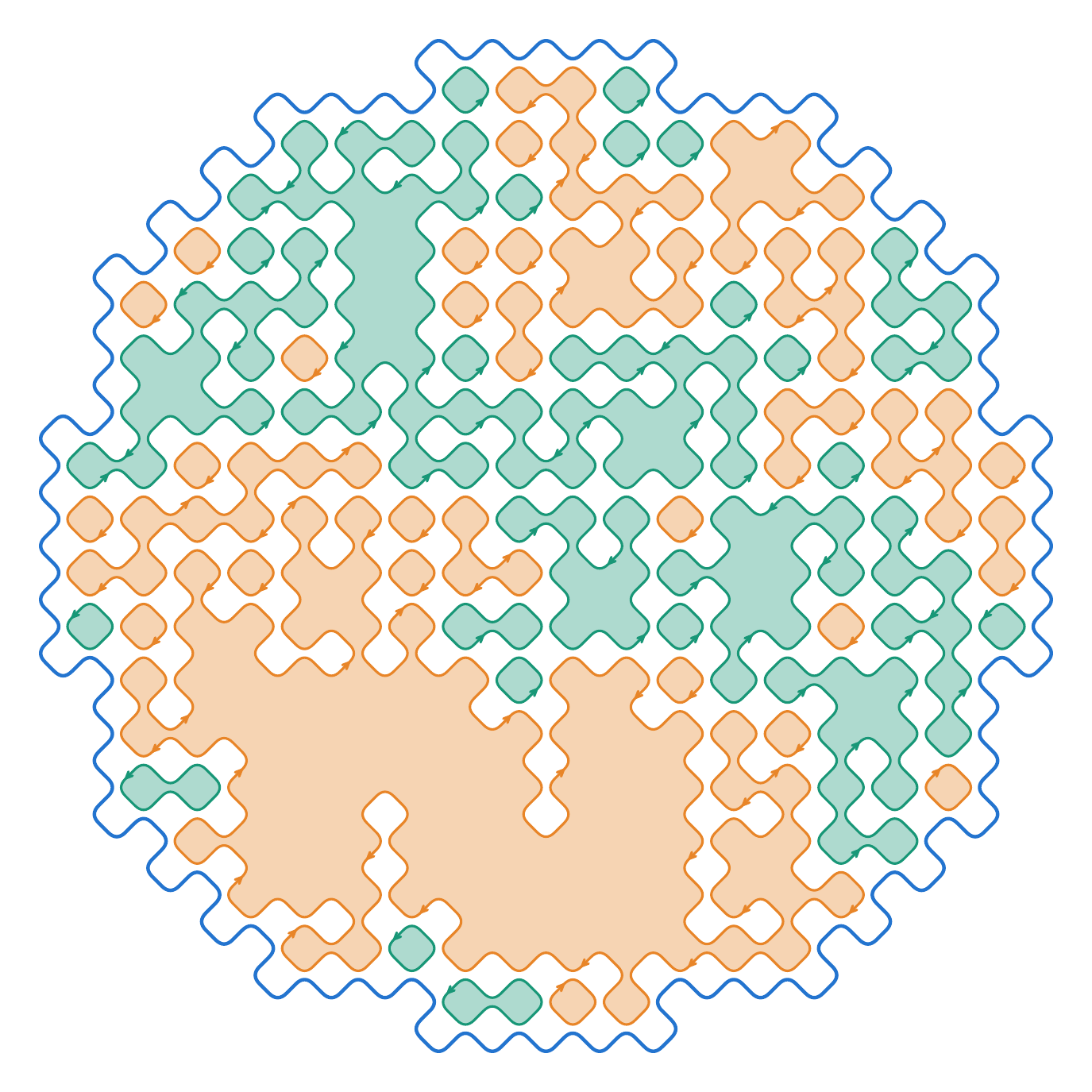}
    \caption{\textbf{A discrete version of the two-valued local set of Lemma~\ref{lemma:discrete_two_valued_set}:}
    The decomposition
    is obtained by exploring the outermost loops in the domain.
    The orientations of the loops are independent fair coin flips.
    By the Markov property, the law inside each loop is given by an independent copy of the height function with zero boundary conditions,
    plus the coin flip coming from the orientation of that loop.\\
    {\color{white}---}In the proof of Theorem~\ref{thm:main}, we argue
    that this discrete decomposition commutes with the scaling limit
    and yields the appropriate two-valued local set of the GFF $\sigma\cdot\Gamma$,
     which enables
    us to identify the scaling limit of the outermost loops as CLE(4).}
    \label{fig:discrete_two_valued_set}
\end{figure}

\begin{lemma}[Discrete two-valued local set (cf. Figure~\ref{fig:discrete_two_valued_set})]
    \label{lemma:discrete_two_valued_set}
Consider a discrete domain $G$, and let $h_G$ be the six-vertex height function in $G$ with zero boundary conditions and variance $\sigma^2=\frac 2\pi$, let $\mathcal L=\mathcal L(\overline\omega)$ be the collection of exterior-most loops in $\overline \omega$, and let $\calD(\ell)$ be the discrete domain obtained by considering all the medial faces fully enclosed by $\ell$. We have the following identity in law:
\begin{equation}\label{eq:discrete decomposition}
h_G=\sum_{\ell\in\mathcal L(\overline\omega)}\bigl(\xi_\ell\mathbbm{1}_{\calD(\ell)}+h_{\calD(\ell)}\bigr),
\end{equation}
where the $\xi_\ell$ are iid $\pm1$ Bernoulli variables, and the $h_{\calD(\ell)}$ are independent copies of the six-vertex height function in $\calD(\ell)$ with zero boundary conditions.
\end{lemma}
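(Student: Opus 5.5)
We prove the identity by working on the joint space of oriented loop configurations given by the BKW coupling, and reading both sides of \eqref{eq:discrete decomposition} off a single sample $\vec\omega\sim\vec{\mathbb P}_G$. By the second bullet of the BKW correspondence, the height function built from $\vec\pi(\vec\omega)$ with value $0$ on $\partial G$ has law $\mathbb P^0_{G,1,1,2}$, so it has the law of $h_G$. It therefore suffices to show that this height function equals the right-hand side of \eqref{eq:discrete decomposition}, with the required joint law of $(\overline\omega,(\xi_\ell),(h_{\calD(\ell)}))$.

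\textbf{Step 1 (deterministic decomposition).} Fix an oriented loop configuration $\vec\omega$. Crossing a loop changes the height by $\pm1$, and the sign is set by the loop's orientation. Going from a face of $\partial G$ inward through a single outermost loop $\ell$ therefore changes the height by $\xi_\ell:=\pm1$, with $+1$ for one orientation and $-1$ for the other. Call $R$ the region outside all outermost loops. Any two faces of $R$ are joined by a path of faces that crosses no loop, so $h$ is constant on $R$; since $R$ contains the boundary faces, $h\equiv 0$ there. Inside $\calD(\ell)$, the loops strictly contained in $\ell$, together with their orientations, form an oriented loop configuration of the discrete domain $\calD(\ell)$; the boundary loop of $\calD(\ell)$ is $\ell$ itself. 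The faces of $\calD(\ell)$ adjacent to $\ell$ all have height $\xi_\ell$, by the same constancy argument applied to the region between $\ell$ and the next generation of loops. Hence $h-\xi_\ell$ restricted to $\calD(\ell)$ is exactly the height function of the restricted oriented configuration with zero boundary condition. This gives \eqref{eq:discrete decomposition} pointwise.

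\textbf{Step 2 (domain Markov property for $\vec{\mathbb P}_G$).} We claim that, conditionally on the set $\mathcal L$ of outermost loops, the following hold:
\begin{itemize}
\item the orientations $(\xi_\ell)_{\ell\in\mathcal L}$ are iid fair signs;
\item the oriented configurations inside the $\calD(\ell)$ are independent of each other and of $(\xi_\ell)$;
\item the configuration inside each $\calD(\ell)$ is distributed as $\vec{\mathbb P}_{\calD(\ell)}$.
\end{itemize}
The first two points are immediate, because $\vec{\mathbb P}_G$ is uniform on oriented configurations, i.e.\ orientations are independent fair coins given $\overline\omega$. For the third, we use the first BKW bullet: $\overline\omega\sim\phi^0_{G,2/3,4}$, and the weight $\sqrt q^{\#\text{loops}}$ (with $\sqrt q=2$ matching the two orientations) factorises over regions. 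We must check that a loop configuration of $G$ whose outermost loops are prescribed is the same thing as a choice of a loop configuration inside each $\calD(\ell)$, i.e.\ that the space of completions factorises. On the medial lattice, loops fill every edge. Given $\mathcal L$, the region $R$ is forced to contain only small loops around the faces it encloses, since any loop there other than the trivial ones would itself be outermost. Inside each $\ell$, the completions are exactly the full loop configurations of $\calD(\ell)$. The third point follows.

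\textbf{Step 3 (conclusion).} Combining Steps 1 and 2, and applying the second BKW bullet inside each $\calD(\ell)$, the restricted height functions are independent copies of $h_{\calD(\ell)}$, independent of the iid signs $\xi_\ell$. Together with the pointwise identity of Step 1, this proves the lemma.

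\textbf{Main obstacle.} The delicate point is Step 2: checking that the medial-lattice conventions produce exactly the discrete domain $\calD(\ell)$ as defined, with its boundary loop being $\ell$, its boundary vertices being the faces adjacent to $\ell$, and the small loops around excluded vertices matching up. This is what justifies both the factorisation of the completion space and the identification of the inner law with $\vec{\mathbb P}_{\calD(\ell)}$. We would check this carefully from the definition of discrete domains, treating separately the case where $\ell$ encloses primal clusters and the case where it encloses dual clusters.
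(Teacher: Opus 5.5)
Your proposal is correct and takes the same route as the paper. The paper's proof is just the remark that the decomposition is immediate from the BKW coupling, with $\xi_\ell=+1$ or $-1$ according to whether $\ell$ is oriented counterclockwise or clockwise; your Steps 1--3 (pointwise decomposition of $h$ along the outermost loops, plus the domain Markov property of $\vec{\mathbb P}_G$) unpack exactly this.

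Two small points of phrasing. First, since every medial edge lies on some loop, the constancy of $h$ on $R$ should be argued along paths in the complement of the loops, which pass through medial vertices between the two arcs, rather than along paths of faces. Second, all three bullets of Step 2 follow at once from the uniformity of $\vec{\mathbb P}_G$: given $\mathcal L$, the set of oriented completions is the product of $\{\pm1\}^{\mathcal L}$ with the sets of oriented configurations inside each $\ell$. Note that every loop in $R$, including length-four ones, already belongs to $\mathcal L$, so no detour through $\phi^0$ is needed.
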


\begin{proof}
The decomposition is immediate by the BKW coupling. 
The variable $\xi_\ell$ is equal to $+1$ if the loop $\ell$ is oriented counterclockwise, and $-1$ if it is oriented clockwise.
\end{proof}

Let us remark that in Lemma \ref{lemma:discrete_two_valued_set}, the jumps of the height function from one side of the loop to another are equal to $\pm 1$. In the corresponding GFF limit, the jumps of the GFF $\sigma\cdot\Gamma$ when traversing a loop of the CLE(4) are $\pm \sigma\cdot 2\sqrt{\frac{\pi}8}=\pm 1$ \cite{SchShe09}, so this discrete decomposition has an exact continuum counterpart. We are now in a position to prove Theorem~\ref{thm:main}.

\begin{proof}[Proof of Theorem~\ref{thm:main}]
It was proved in~\cite{SepWer17} that there is a unique thin local set
with the property that the GFF $\sigma\cdot\Gamma$ has boundary values $\pm1$ on the boundary of
each connected component of the complement of the local set.
Moreover, this thin local set is precisely the collection of outermost loops of a CLE(4).
Below, we argue that the decomposition obtained in Lemma~\ref{lemma:discrete_two_valued_set}
commutes with the limit, and yields precisely this thin local set.
This proves that the limit of the law of the outermost discrete loops is precisely CLE(4).

We start by studying the scaling limit of exterior-most loops. RSW theory \cite{DumSidTas16} implies precompactness of the family of loops. Thanks to this precompactness, it suffices to show that any sub-sequential limit $\calL$ of the family of loops $\calL(\overline\omega_\delta)$ is equal to the exterior-most loops of a CLE(4). 
Consider such a sub-sequential limit $\calL$. Let $\mathcal A$ be the ``gasket'' of the loops in $\calL$, meaning the closure of the connected component in $\calD\setminus \cup_{\ell\in \calL}\ell$ of $\partial\calD$. 

First, note that $\mathcal A$ is thin. Indeed, standard RSW estimates \cite{DumSidTas16} imply that for a ball of size $\eps>0$ at distance $r>\eps$ of $\partial\calD$ to be contained in $\mathcal A$, there should be no loop in $\calL$ surrounding it. This occurs with probability $(\eps/r)^c$ for some constant $c>0$. The proof of thinness then follows from the same computation as in the proof of \cite[Lemma 6.10]{DumLisQia25}. 

Second, we claim that $\mathcal A$ is a local set in the sense of \cite{SepWer17}. Indeed, by the Skorokhod representation theorem assume that the $\omega_\delta$ are coupled so that the $\calL(\overline\omega_\delta)$ converge almost surely to $\calL$. Using precompactness for the height function seen as a random distribution, we may extract a sub-sequential limit in which $h_{\calD_\delta}$ and $(h_{\calD(\ell_\delta)}:\ell_\delta\in\calL(\overline\omega_\delta))$ converge almost surely to $h_\calD$ and $(h_{\calD(\ell)}:\ell\in \calL)$, in the sense that for all loops $\ell\in\calL$, if a sequence of loops $\ell_\delta$ of $\calL(\overline\omega_\delta)$ converges to $\ell$, then $h_{\calD(\ell_\delta)}$ converges to $h_{\calD(\ell)}$ (in the sense of Theorem \ref{thm:even}).
From \eqref{eq:discrete decomposition}, we deduce the following identity in law
\begin{equation}
h_\calD=\sum_{\ell\in\mathcal L}\bigl(\xi_\ell\mathbbm{1}_{\calD(\ell)}+h_{\calD(\ell)}\bigr).
\end{equation}
As a consequence, $\mathcal A$ is indeed a local set. Note that we use uniform thinness (as $\delta$ tends to zero) of the gaskets of $\calL(\overline\omega_\delta)$ (which is obtained by RSW theory \cite{DumSidTas16} in the same way as for $\calL$) to justify that small loops in $\calL(\overline\omega_\delta)$ (i.e.~loops of size $o(1)$) do not contribute in the limit.

Third, we identify the values of the field on the boundary of $\mathcal A$. Inside each component of $\mathcal A^c$, the limit height function is a GFF with boundary conditions $\xi_\ell\in\{-1,+1\}$, where $\ell$ is the loop tracing out the boundary of this component. Note that we do not claim that the boundary conditions in each such component are i.i.d.~since one loop $\ell$ can a priori carve multiple components if it touches itself, but it will suffice for us that the boundary conditions are either $-1$ or $+1$.

By the jump calculation preceding the proof and the characterization above, it follows that $\calL$ is the set of exterior most loops of CLE(4); see~\cite{SepWer17}. 

The nested result readily follows by the Markov property and precompactness.
\end{proof}

\section{Proof of the corollaries}

\begin{proof}[Proof of Corollary~\ref{cor:conformal invariance point-to-point}]
Consider a simply connected domain $\calD$ and $n\ge1$. For any partition $P=P_1\sqcup\dots\sqcup P_s$ of $\{1,\dots,n\}$, for every $x_1,\dots,x_n\in \calD$ and $\eps>0$, the CLE(4) convergence implies that 
\begin{equation}
\lim_{\delta\rightarrow0}\phi_{\calD_\delta,2/3,4}^0[\mathcal E_\delta^\eps(P;x_1,\dots,x_n)]=\bbP_{\rm CLE(4)}(\mathcal E^\eps(P;x_1,\dots,x_n)),\end{equation}
where $\mathcal E_\delta^\eps(P;x_1,\dots,x_n)$ denotes the event that the Euclidean balls of size $\eps$ around  $x_i$ and $x_j$ are connected in $\omega_\delta$ (resp.~the complement of loops in CLE(4))  if and only if $i$ and $j$ are in the same $P_t$.  

The mixing property \cite{Che26a} implies that for every $\eta>0$, there exists $\eps>0$ such that 
\begin{equation}
\Big|\phi_{\calD_\delta,2/3,4}^0[\mathcal E_\delta(P;x_1,\dots,x_n)]-\phi^0_{\delta\mathbb Z^2,2/3,4}[0\leftrightarrow \infty|B_\eps\leftrightarrow\infty]^n \phi_{\calD_\delta,2/3,4}^0[\mathcal E_\delta^\eps(P;x_1,\dots,x_n)]\Big|\le\eta.
\end{equation}
Above, we used the measure $\phi^0_{\delta\mathbb Z^2,2/3,4}[\cdot|B_\eps\leftrightarrow\infty]$ which is the limit (as $R$ tends to infinity) of the measures conditioned on the Euclidean ball of size $\eps$ around 0 being connected to the Euclidean ball of size $R$ around $0$.
The result then follows by choosing a sequence $(\eps_\delta)$ slowly tending to zero as $\delta$ tends to zero, and then setting
\begin{equation}
\kappa(\delta):=\eps_\delta^{-1/8}\phi^0_{\delta\mathbb Z^2,2/3,4}[0\leftrightarrow \infty|B_{\eps_\delta}\leftrightarrow\infty]
\end{equation}
and 
\begin{equation}
\Psi_\calD(P;x_1,\dots,x_n):=\lim_{\eps\rightarrow 0}\eps^{-n/8}\bbP_{\rm CLE(4)}(\mathcal E^\eps(P;x_1,\dots,x_n)).
\end{equation}
The asymptotics of $\kappa(\delta)$ are due to \cite[Theorem~1.2]{Che26a}.
The conformal covariance follows easily from properties of CLE(4).
\end{proof}

\begin{proof}[Proof of Corollary~\ref{cor:conformal invariance Potts}]
This is a direct consequence of the Edwards--Sokal coupling and Corollary~\ref{cor:conformal invariance point-to-point}.
\end{proof}

\paragraph{AI use statement.}
No mathematical content was generated or found with the assistance of artificial intelligence.

\paragraph{Acknowledgments.} The second author was funded by the ERC SPRaPaF. The first, second, third, and fifth authors were funded by the NCCR SwissMap and the Swiss FNS.
The fourth author acknowledges support by the French National Research Agency (ANR),
project number ANR-23-CPJ1-0150-01,
and the project ERC GeStruct.


\end{document}